\documentclass[10pt,a4paper,twocolumn]{article}

\usepackage[a4paper, top=2.5cm, bottom=2.5cm, left=2cm, right=2cm,
            columnsep=0.6cm]{geometry}
\usepackage{times}
\usepackage[T1]{fontenc}
\usepackage[utf8]{inputenc}
\usepackage{microtype}
\usepackage{graphicx}
\usepackage{amsmath,amssymb}
\usepackage{booktabs}
\usepackage{caption}
\usepackage{hyperref}
\usepackage{titlesec}
\usepackage{array}
\usepackage{tabularx}
\usepackage{url}
\usepackage{xcolor}
\usepackage{enumitem}
\usepackage{float}
\usepackage{amsthm}

\titleformat{\section}[block]
  {\normalfont\fontsize{10}{12}\bfseries\MakeUppercase}
  {\thesection.}{0.5em}{}
\titlespacing*{\section}{0pt}{20pt}{10pt}

\titleformat{\subsection}[block]
  {\normalfont\fontsize{10}{12}\bfseries}
  {\thesubsection}{0.5em}{}
\titlespacing*{\subsection}{0pt}{10pt}{10pt}

\titleformat{\subsubsection}[block]
  {\normalfont\fontsize{10}{12}}
  {\thesubsubsection}{0.5em}{}
\titlespacing*{\subsubsection}{0pt}{10pt}{10pt}

\hypersetup{colorlinks=true, linkcolor=black, citecolor=black,
            urlcolor=blue}

\newtheorem{definition}{Definition}
\newtheorem{proposition}{Proposition}

\begin{document}

\twocolumn[{%
\begin{@twocolumnfalse}

  \noindent\rule{\linewidth}{0.8pt}

  \begin{center}
    {\large\bfseries
      Analyzing Healthcare Interoperability Vulnerabilities:
      Formal Modeling and Graph-Theoretic Approach
    }\\[8pt]

    Jawad Mohammed$^{1*}$, Gahangir Hossain$^{2}$\\[4pt]

    {\small
      $^1$ University of North Texas, Denton, TX, USA\\
      $^2$ University of North Texas, Denton, TX, USA\\[2pt]
      Corresponding Author Email: \texttt{JawadMohammed@unt.edu}
    }
  \end{center}

  \noindent\rule{\linewidth}{0.4pt}

  \begin{minipage}[t]{0.38\linewidth}
    \small
    \textbf{Copyright:} \copyright2026 The authors. This article is
    published by IIETA and is licensed under the CC BY 4.0 license
    (\url{http://creativecommons.org/licenses/by/4.0/}).\\[4pt]
    \url{https://doi.org/10.18280/isi.xxxxxx}\\[6pt]
    \textbf{Received:}\\
    \textbf{Revised:}\\
    \textbf{Accepted:}\\
    \textbf{Available online:}
  \end{minipage}%
  \hfill
  \begin{minipage}[t]{0.58\linewidth}
    \small
    \textbf{ABSTRACT}\\[4pt]
    In a healthcare environment, the healthcare interoperability
    platforms based on HL7 FHIR allow concurrent, asynchronous access
    to a set of shared patient resources, which are independent
    systems, i.e., EHR systems, pharmacy systems, lab systems, and
    devices. The FHIR specification lacks a protocol for concurrency
    control, and the research on detecting a race condition only
    targets the OS kernel. The research on FHIR security only targets
    authentication and injection attacks, considering concurrent access
    to patient resources to be sequential. The gap in the research in
    this area is addressed through the introduction of FHIR Resource
    Access Graph (FRAG), a formally defined graph
    $G = (P, R, E, \lambda, \tau, S)$, in which the nodes are the
    concurrent processes, the typed edges represent the resource access
    events, and the race conditions are represented as detectable
    structural properties. Three clinically relevant race condition
    classes are formally specified: Simultaneous Write Conflict (SWC),
    TOCTOU Authorization Violation (TAV), and Cascading Update Race
    (CUR). The FRAG model is implemented as a three-pass graph
    traversal detection algorithm and tested against a time
    window-based baseline on 1,500 synthetic FHIR R4 transaction logs.
    Under full concurrent access (C2), FRAG attains a 78.5\% F1 score
    vs.\ 98.8\% for the baseline, with FRAG achieving 98.0\% on SWC
    and 99.9\% on TAV while the baseline fails to distinguish CUR
    chains. Under partial ETag synchronization (C3), FRAG precision
    remains above 96\% for SWC and TAV while recall drops due to
    ETag-filtered writes.\\[4pt]
    \textbf{\textit{Keywords:}}
    \textit{race condition, critical section, FHIR, HL7, healthcare
    interoperability, graph theory, TOCTOU, EHR security, concurrent
    access, formal methods}
  \end{minipage}

  \noindent\rule{\linewidth}{0.8pt}
  \vspace{10pt}
\end{@twocolumnfalse}
}]


\section{Introduction}

The problem addressed in this paper did not come with a label. It
evolved from recognizing the fact that two communities, one in the
area of health care informatics and the other in systems security,
have been dealing with related problems for a long time without ever
making a formal link between the two. One community examines the flow
of patient information across heterogeneous health care systems, and
the other examines what happens when multiple processes contend for a
shared memory space without proper synchronization. The combination of
these two problems has, to our knowledge, never been formally
addressed.

Consider the following scenario, which, though not hypothetical,
represents the normal architecture of any FHIR-based interoperability
implementation: A patient presents to an emergency room. A laboratory
system has just completed a new result of a drug sensitivity test and
is updating an AllergyIntolerance resource to the FHIR server. At
exactly the same time, an electronic health record system is reading
the same resource to check a new prescription order entered by a
prescribing physician. At exactly the same time, a clinical decision
support system is reading the patient's allergy history to generate a
contraindication alert. Three systems, one resource, no lock, no
mutex, no sequencing. The FHIR spec doesn't define one [1].

\begin{figure}[H]
  \centering
  \includegraphics[width=\linewidth]{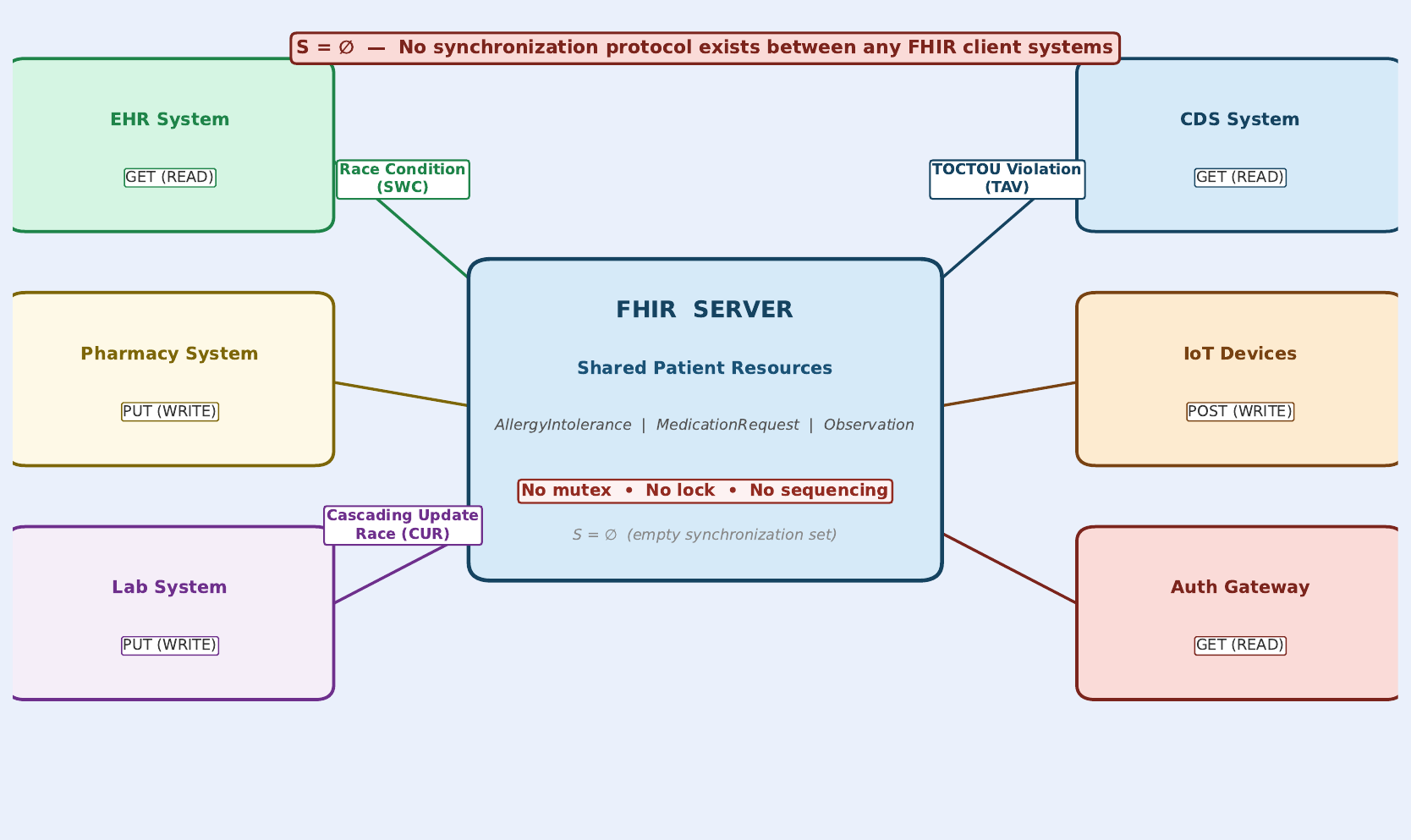}
  \caption{Cyber threat landscape in FHIR-based healthcare
  interoperability. Multiple independent systems access shared patient
  resources concurrently with no synchronization protocol ($S =
  \emptyset$), giving rise to race condition classes SWC, TAV, and
  CUR.}
  \label{fig:cyber}
\end{figure}

The problem of race condition is sixty years old. Dijkstra formalized
the problem of critical section in 1965, and the three conditions for
concurrent access, namely mutual exclusion, progress, and bounded
waiting, have been the foundation of operating systems development
from the start [9]. What is interesting in the case of healthcare is
not the novelty of the problem. It is the fact that we have built a
whole infrastructure of data exchange between clinical applications
based on a technology, namely RESTful APIs, that does not offer any
of the synchronization mechanisms that operating systems research has
spent sixty years developing. A process in an operating system
accessing a shared memory has access to a mutex. A FHIR client
accessing a shared patient resource has access to nothing but the
optional ETag header, which is not enforced by most implementations
[6].

The result is a class of vulnerability that is neither a network
attack nor an authentication failure. It is a timing vulnerability, a
property of the structure of the interaction of multiple independently
correct systems whose lack of coordination causes the problem. It
could result in a patient receiving a contraindicated drug not because
any of the systems has failed, but because three systems acted
correctly in a manner that their respective correctness could not
prevent.

The specific and bounded claim of this paper is: ``Race condition
vulnerabilities in FHIR-based healthcare interoperability systems are
formally characterizable as properties of a directed access graph, and
this characterization provides a practical detection framework.'' Our
claim is not that we have solved the synchronization problem for FHIR.
Our claim is that we have formally stated the detection problem for
the first time and have shown it to be both precise and tractable.

\subsection{Contributions}

This paper contributes to the field in four ways. First, a structured
literature survey formally identifies the existing gap between OS
concurrency research and healthcare interoperability security research.
Second, the FHIR Resource Access Graph (FRAG) is proposed as a
formally defined directed graph with a six-component tuple that
explicitly defines a synchronization constraint set. Third, three
classes of clinically relevant race conditions (SWC, TAV, CUR) are
formally defined, each with a formal proof of connection to FRAG
properties. Fourth, this work is validated through a simulation-based
evaluation on 1,500 synthetic FHIR R4 transaction logs under three
concurrency conditions, along with the open-sourcing of the source
code.

\section{Related Work}

The relevant literature falls into two categories which never
reference each other. The bridge between the two is what this paper
hopes to achieve.

\subsection{Concurrency detection in systems software}

Upadhyay et al.\ [2] give the basic taxonomy of race condition
detectors, including static, dynamic, and hybrid techniques from 1993
until 2023. Their main conclusion, that data race bugs make up 80\%
of concurrency bugs, frames the scale of the problem this paper
brings into a new domain.

Li et al.\ [3] introduce a static analyzer called LR-Miner that is
field graph-based and uses the Linux kernel source code to mine
locking rules and detect violations across 17.2 million lines of
code. LR-Miner is relevant in a negative sense: its entire premise is
the inference of rules from code patterns. There is no prior history
of locking in healthcare middleware or patches applied to fix race
condition problems.

ConSynergy [4] combines static analysis and LLM-based semantic
reasoning in a four-stage pipeline and achieves an F1 score of 0.867
on general benchmark datasets. It is the most similar existing work in
spirit, as it employs structured representations of concurrent
accesses. However, it works on compiled source code, not transaction
logs, requires a training corpus of known concurrency bugs, and has
not been validated in any healthcare context. None of these
prerequisites are needed for the FRAG model.

KNighter [5] constructs static analyzers from bug fix patches using
LLMs. Like LR-Miner, KNighter is patch-dependent and cannot detect
race conditions in systems where no incidents have been documented,
which is the case in healthcare middleware.

\subsection{Healthcare interoperability security}

TXOne Networks [6] document authentication flaws in FHIR systems,
including the XXE injection vulnerability CVE-2024-52007. This is the
most technically detailed analysis of FHIR security in recent years.
However, the entire analysis assumes sequential data access. There is
no mention of the term ``concurrent access.'' The unstated assumption
that only one system reads or writes at a time is architecturally
false in any actual multi-system environment.

The DLT-based EHR system architecture in [7] addresses data integrity
through blockchain immutability. This is a real contribution, but
limited to the storage level. The problem of race conditions is at the
access level, between the time a process reads a resource and the time
it acts upon the result, which blockchain architecture does not
address.

Gong et al.\ [8] propose Snowcat, a coverage-guided kernel
concurrency tester that uses a learned predictor to detect concurrency
bugs in the Linux kernel. Their result, that concurrency bugs in the
kernel require domain-specific concurrency models, motivated the
current work. The FHIR-based interoperability has a concurrency
model, namely asynchronous RESTful HTTP with optional ETags, quite
different from kernel threading and requiring domain-specific analysis.

\subsection{The gap}

There are no existing studies that formally model, detect, or
characterize race conditions in FHIR or HL7-based healthcare systems.
The gap identified here is not incremental; it is the absence of a
research area altogether. The present work serves as the founding
paper of that area.

\section{Background}

\subsection{The critical section problem}

A critical section is a set of instructions where a shared resource is
accessed, and incorrect execution by multiple processes results in
inconsistent results. Dijkstra's 1965 formalization [9] specified
three necessary conditions: mutual exclusion (at most one process in
the critical section at any time), progress (a process wishing to
enter can do so without indefinite delay), and bounded waiting (no
process waits forever). Race condition is the failure of mutual
exclusion. In modern OS kernel code, race conditions are avoided by
mutex locks, semaphores, spinlocks, and Read Copy Update (RCU)
mechanisms developed over sixty-plus years.

\subsection{FHIR architecture and the concurrency model}

FHIR is a RESTful API standard for data exchange in the health
sector, utilizing standard HTTP methods GET, PUT/POST, and DELETE on
resource types such as Patient, MedicationRequest,
AllergyIntolerance, and Observation [1]. Every resource is associated
with a versionId and lastUpdated timestamp.

In FHIR production environments, many client systems send concurrent
HTTP requests to the FHIR server: EHR systems, pharmacy systems,
laboratory information systems, decision support systems, and
IoT-based medical devices can all send concurrent GET and PUT requests
to the same resources. All these client systems operate independently
and asynchronously with no coordination protocol between them.

FHIR inherits all the concurrent access risks that OS research has
spent six decades studying, yet offers none of the synchronization
tools OS research developed to overcome those risks. An OS process can
grab a mutex lock on shared memory; a FHIR client may optionally
include an ETag header that is not mentioned in the FHIR specification
conformance guidelines whatsoever [6]. This healthcare data exchange
system is designed to be concurrently accessible but operationally not
synchronized at all.

\subsection{Real-world prevalence of the three concurrency conditions}

The simulation study in Section~V tests race detection under three
conditions. \textbf{C1 (Sequential access)} refers to small
single-system implementations with no concurrent access, retained only
as a false positive control. \textbf{C2 (Concurrent, unsynchronized)}
is the default state of all production FHIR deployments that do not
implement conditional updates with ETag-based concurrency across all
clients. In a typical hospital morning, pharmacy and EHR systems
simultaneously update MedicationRequest resources, lab and decision
support systems concurrently post and query Observations, and IoT
devices post Observations while dashboards read them --- all perfectly
normal, none requiring a fault or an attack. \textbf{C3 (Concurrent,
partially synchronized)} represents the realistic condition of a large
hospital where major EHR vendors have adopted ETag conditional updates
but older pharmacy and laboratory systems have not. C3 is not a
transient condition; it is the stable long-run state of any
environment where system replacement is incremental.

\subsection{Why existing synchronization approaches are insufficient}

Database row locking solves the SWC problem at the storage layer when
both writes reach the same FHIR server's database, but does not solve
TAV (which involves an OAuth gateway and a FHIR server with no common
transactional context) and fails entirely for CUR (which involves
three independent systems with no knowledge of each other's read
history).

ETag optimistic locking ensures a write does not overwrite a version
the client did not intend to change, but ETag support is optional, and
it cannot handle the authorization token lifecycle (TAV) or a
stale-read dependency chain across multiple resources (CUR). A system
with perfect ETag implementation on all write operations remains
susceptible to TAV and CUR.

Pessimistic locking would prevent SWC and TAV if fully implemented,
but is incompatible with the stateless nature of HTTP. Lock
reservation between requests would require a separate service,
inducing latency and a single point of failure.

Event sourcing and CRDTs avoid write conflicts by definition and solve
SWC, but they do not solve TAV and require a complete redesign of the
FHIR server and all clients.

Table~\ref{tab:sync} summarizes coverage over race classes. FRAG
covers all three. You cannot require ETags where they are absent
without first identifying where they are absent, and you cannot
diagnose a CUR dependency chain without a formal model of what it
looks like.

\begin{table}[H]
  \caption{Coverage of synchronization approaches across race classes}
  \label{tab:sync}
  \centering
  \fontsize{9}{11}\selectfont
  \begin{tabular}{p{2cm}cccp{1.6cm}}
    \toprule
    \textbf{Approach} & \textbf{SWC} & \textbf{TAV} & \textbf{CUR} &
    \textbf{Primary barrier} \\
    \midrule
    DB row locking        & Partial    & No      & No      & Single-system only \\
    ETag / opt.\ lock     & If adopted & Partial & No      & Inconsistent adoption \\
    Pessimistic locking   & Yes        & Partial & Partial & HTTP statelessness \\
    Event sourcing / CRDT & Yes        & No      & Better  & Full system redesign \\
    FHIR Subscriptions    & N/A        & N/A     & Partial & Client adoption \\
    \textbf{FRAG (this paper)} & \textbf{Detect} & \textbf{Detect} & \textbf{Detect} & --- \\
    \bottomrule
  \end{tabular}
\end{table}

\section{The FHIR resource access graph}

\subsection{Formal definition}

We define the FHIR Resource Access Graph as a directed labeled graph
representing the concurrent access patterns in a FHIR deployment.
Edges are directed from processes to resources, allowing us to
represent paths between processes through the mediation of resources;
a path $p_1 \to r \to p_2$ represents two processes connected to the
same resource node $r$, enabling detection of the CUR pattern.

\begin{definition}[FRAG]
Let $G = (P, R, E, \lambda, \tau, S)$ be a concurrent system where:
$P = \{p_1,\ldots,p_n\}$ is a finite set of concurrent client
processes; $R = \{r_1,\ldots,r_m\}$ is a finite set of FHIR
resources; $E \subseteq P \times R$ is a set of directed edges
representing resource accesses; $\lambda: E \to \{\text{READ},
\text{WRITE}\}$ is a labeling function for access type;
$\tau: E \to \mathbb{R}^+$ assigns a timestamp to each access event;
$S \subseteq P \times P$ is a set of synchronization constraints,
i.e., $p_i$ and $p_j$ are subject to a mutual exclusion constraint on
resource accesses.
\end{definition}

\begin{definition}[Race condition]
A race condition $RC(p_i, p_j, r)$ exists in $G$ if and only if:
(i) $(p_i, r) \in E$, $(p_j, r) \in E$, and $i \neq j$;
(ii) at least one of $\lambda(p_i, r)$, $\lambda(p_j, r)$ equals
WRITE;
(iii) $(p_i, p_j) \notin S$ and $(p_j, p_i) \notin S$;
(iv) $[\tau(p_i,r){-}\delta,\,\tau(p_i,r){+}\delta] \cap
[\tau(p_j,r){-}\delta,\,\tau(p_j,r){+}\delta] \neq \emptyset$
for some concurrency window $\delta > 0$.
\end{definition}

\begin{definition}[TOCTOU window]
A TOCTOU window $W(p_i, r, r')$ exists when:
(i) $p_i$ performs READ($r$) at time $t_1$;
(ii) $p_i$ performs a dependent WRITE($r'$) or decision $D(r')$ at
time $t_2 > t_1$;
(iii) some $p_j \neq p_i$ performs WRITE($r$) at time $t$ where
$t_1 < t < t_2$;
(iv) $(p_i, p_j) \notin S$.
The duration $(t_2 - t_1)$ is the vulnerability window.
\end{definition}

\begin{definition}[Cascading update race]
A Cascading Update Race $\text{CUR}(p_1, p_2, p_3, r_1, r_2)$ exists
when:
(i) $p_2$ performs WRITE($r_1$) at time $t_w$;
(ii) $p_1$ performed READ($r_1$) at time $t_r < t_w$ (stale read);
(iii) $p_3$ performs READ($r_1$) or a decision dependent on $p_1$'s
stale value at time $t_a > t_w$;
(iv) no synchronization constraint in $S$ covers the access chain
$p_1 \to r_1 \to p_3$.
\end{definition}

\subsection{Formal propositions}

The following propositions establish that each race condition class
defined in Section~IV.D is a special case of Definition~2, and thus
detectable by FRAG analysis.

\begin{proposition}[SWC is a race condition]
If $\text{SWC}(p_i, p_j, r)$ exists --- both $p_i$ and $p_j$ hold
concurrent WRITE edges to $r$ with no synchronization constraint ---
then $RC(p_i, p_j, r)$ holds under Definition~2.
\end{proposition}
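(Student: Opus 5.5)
The proof is a direct verification: unpack the hypothesis $\text{SWC}(p_i,p_j,r)$ and check each of the four clauses of Definition~2 in turn. Since SWC is only described informally in the statement, I first fix what it means. $\text{SWC}(p_i,p_j,r)$ holds when three things are true. First, $p_i \neq p_j$ and both $(p_i,r)$ and $(p_j,r)$ lie in $E$ with $\lambda(p_i,r)=\lambda(p_j,r)=\text{WRITE}$. Second, the writes are concurrent, meaning $|\tau(p_i,r)-\tau(p_j,r)|\le 2\delta$ for the same window $\delta>0$ that Definition~2 uses. Third, neither $(p_i,p_j)$ nor $(p_j,p_i)$ belongs to $S$.

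Clauses (i)--(iii) are then immediate. Clause (i) is the membership of both edges in $E$ together with $i\neq j$, since two distinct processes are involved. Clause (ii) requires at least one WRITE label, and here both labels are WRITE, so the disjunction holds with room to spare. Clause (iii) is exactly the absence-of-synchronization part of the SWC hypothesis. I would note that ``no synchronization constraint'' must be read symmetrically, excluding both ordered pairs, because $S\subseteq P\times P$ is not assumed symmetric.

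The only clause needing a short argument is (iv): showing that the intervals $I_i=[\tau(p_i,r)-\delta,\tau(p_i,r)+\delta]$ and $I_j=[\tau(p_j,r)-\delta,\tau(p_j,r)+\delta]$ intersect. Write $a=\tau(p_i,r)$ and $b=\tau(p_j,r)$, and assume without loss of generality that $a\le b$. Take the midpoint $m=(a+b)/2$. Then $m-a=(b-a)/2\le\delta$ and $b-m\le\delta$, so $m\in I_i\cap I_j$. The converse also holds, since two closed intervals of radius $\delta$ meet exactly when their centres are within $2\delta$. So the concurrency notion in SWC coincides with clause (iv), and this step is tight rather than merely sufficient.

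I expect the main obstacle to be definitional rather than mathematical. The word ``concurrent'' in SWC must be pinned to the same $\delta$ as in Definition~2. If SWC were instead defined with a different window $\delta'$, the implication would hold only when $\delta'\le\delta$. A secondary subtlety is that $\lambda$ and $\tau$ are functions on $E\subseteq P\times R$, so each process has a single access edge to $r$. I would therefore state the proposition per edge pair, so that a process that writes $r$ several times is modelled either by distinct events or by separate process instances. With these conventions fixed, the proof is a four-line clause-by-clause check.
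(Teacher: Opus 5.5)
Your proposal is correct and takes essentially the same approach as the paper: both unpack SWC and verify clauses (i)--(iv) of Definition~2 directly. You are somewhat more careful than the paper in three places. The paper only asserts that (iv) holds ``by assumption of concurrent write within window $\delta$'', whereas you fix the hypothesis $|\tau(p_i,r)-\tau(p_j,r)|\le 2\delta$ (implied by the $|\Delta\tau|<\delta$ used in Pass~2 and the SWC injection) and prove the interval intersection with the midpoint. The paper also records only $(p_i,p_j)\notin S$, while you rightly require both ordered pairs to be absent from $S$.
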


\textit{Proof.} By definition of SWC: $(p_i, r) \in E$,
$(p_j, r) \in E$, $\lambda(p_i,r) = \lambda(p_j,r) = \text{WRITE}$,
and $(p_i, p_j) \notin S$. Conditions (i), (ii), and (iii) of
Definition~2 are directly satisfied. Condition (iv) holds by
assumption of concurrent write within window $\delta$. \hfill$\square$

\begin{proposition}[TAV implies TOCTOU window]
A TOCTOU Authorization Violation $\text{TAV}(p_i, r_{\text{auth}})$
--- where $p_i$ reads an authorization resource and acts on it after
another process has modified it --- implies a TOCTOU window
$W(p_i, r_{\text{auth}}, r')$ under Definition~3.
\end{proposition}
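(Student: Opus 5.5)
The plan mirrors the proof of Proposition~1: unfold the informal definition of TAV into the primitive events of a FRAG, then match them one by one against conditions (i)--(iv) of Definition~3. Set $r = r_{\text{auth}}$ and take $r'$ to be the protected clinical resource, or the access decision, that $p_i$'s action concerns.

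First, I formalize TAV as a tuple of events. $p_i$ performs READ($r_{\text{auth}}$) at time $t_1$, i.e.\ there is an edge $(p_i, r_{\text{auth}}) \in E$ with $\lambda = \text{READ}$ and $\tau = t_1$. Some process $p_j \neq p_i$, for instance the OAuth gateway revoking or narrowing a token, performs WRITE($r_{\text{auth}}$) at time $t$. Finally $p_i$ ``acts on'' the authorization at a time $t_2$. This act is either a WRITE($r'$) or a decision $D(r')$ gated by the value read at $t_1$, so it is dependent in the sense of Definition~3(ii).

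Second, I verify the temporal ordering. The phrase ``acts on it after another process has modified it'' gives $t < t_2$. The check precedes the modification, since otherwise $p_i$ would have read the fresh token and there would be no violation; this gives $t_1 < t$. Hence $t_1 < t < t_2$, which yields both $t_2 > t_1$ in condition (ii) and the interleaving in condition (iii). Condition (i) is the READ event itself.

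Third, condition (iv), $(p_i, p_j) \notin S$. I would argue by contraposition: if $(p_i, p_j) \in S$, mutual exclusion would forbid $p_j$'s WRITE($r_{\text{auth}}$) from interleaving inside $p_i$'s check-then-use section, so no TAV could arise. The absence of a constraint is therefore part of what it means for a TAV to exist, exactly as in Proposition~1.

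The main obstacle is that TAV is defined only informally. The step $t_1 < t$ and the exclusion from $S$ are implicit in the word ``after'' and in the very notion of a violation, not stated outright. To keep the argument honest, I would make these explicit as the defining clauses of TAV, namely a stale authorization read followed by an interleaved modification with no synchronization, and then the implication follows directly. Finally, I would note that $(t_2 - t_1)$ is the vulnerability window, and that it is positive.
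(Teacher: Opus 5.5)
Your proposal is correct and follows the paper's proof, which likewise sets $r = r_{\text{auth}}$, takes the READ at $t_1$, the interleaved WRITE by $p_j$ at $t$ with $t_1 < t < t_2$, and the act at $t_2$, and then declares Definition~3 (i)--(iv) satisfied directly. You are more explicit than the paper, which never addresses $(p_i,p_j)\notin S$ or $p_j \neq p_i$ separately. Your fallback of writing these into the definition of TAV is the sound way to obtain (iv). The contrapositive appeal to mutual exclusion is weaker, because $S$ in Definition~1 imposes mutual exclusion on individual resource accesses, not on a whole check-then-use section.
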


\textit{Proof.} Let $r = r_{\text{auth}}$. By definition of TAV:
$p_i$ performs READ($r_{\text{auth}}$) at $t_1$; $p_j$ performs
WRITE($r_{\text{auth}}$) at $t$ where $t_1 < t < t_2$; $p_i$ acts at
$t_2$. These conditions satisfy Definition~3 (i)--(iv) directly.
\hfill$\square$

\begin{proposition}[CUR detectability]
A Cascading Update Race $\text{CUR}(p_1, p_2, p_3, r_1, r_2)$ is
detectable by path traversal of depth $\geq 2$ in $G$, specifically
as a stale-read path from a WRITE edge to a downstream READ edge with
timestamp ordering $t_r < t_w$.
\end{proposition}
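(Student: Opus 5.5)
The argument unpacks Definition~4 and exhibits the CUR chain as a concrete length-two path in $G$ that a traversal can enumerate. Given $\text{CUR}(p_1,p_2,p_3,r_1,r_2)$, conditions (i)--(iii) yield three edges incident to $r_1$:
\[
e_w=(p_2,r_1),\quad e_r=(p_1,r_1),\quad e_a=(p_3,r_1)\in E,
\]
with labels and timestamps
\[
\lambda(e_w)=\text{WRITE},\ \tau(e_w)=t_w;\qquad \lambda(e_r)=\text{READ},\ \tau(e_r)=t_r;\qquad \lambda(e_a)=\text{READ},\ \tau(e_a)=t_a,
\]
and $t_r<t_w<t_a$. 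If $p_3$'s action is a decision dependent on $p_1$'s stale value rather than a direct read of $r_1$, I model it as an access edge from $p_3$ to $r_2$, the resource carrying $p_1$'s derived output. This makes the chain $p_1\to r_1$ followed by $r_2\to p_3$, which is still a bounded-depth path. Either way, the edges $e_r$ and $e_a$ form the process--resource--process path $p_1\to r_1\to p_3$ described in Section~IV.A, which has depth $2$.

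Next I would specify the traversal and show it finds exactly this pattern. For each resource $r$ and each WRITE edge $(p_2,r)$ with timestamp $t_w$, the traversal enumerates pairs of incident READ edges $(p_1,r)$ and $(p_3,r)$ with $\tau(p_1,r)<t_w<\tau(p_3,r)$. It then checks that neither $(p_1,p_3)$ nor the pairs involving $p_2$ lie in $S$, in either orientation. Soundness follows by reading the checks back as conditions (i)--(iv) of Definition~4. Completeness follows because the edges $e_w$, $e_r$, $e_a$ are all incident to $r_1$, so they are reached when the search visits neighbors of $r_1$, and the depth-$2$ search from $p_1$ through $r_1$ visits $p_3$. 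The timestamp ordering $t_r<t_w$ is the property that makes the read stale. Combined with $t_w<t_a$, it means a time-ordered walk starting at the WRITE edge finds the downstream READ edge.

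The main obstacle is condition (iv) of Definition~4: ``no synchronization constraint in $S$ covers the access chain $p_1\to r_1\to p_3$.'' This is stated informally, and $S\subseteq P\times P$ carries no resource index. I would first fix it as
\[
(p_1,p_3),\,(p_3,p_1),\,(p_1,p_2),\,(p_2,p_1)\notin S,
\]
that is, as pairwise conditions checkable in $O(1)$ per candidate chain.

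A second subtlety is Definition~1 itself. Since $E\subseteq P\times R$ is a set, one process has at most one edge per resource, which rules out separate edges for $p_1$'s read of $r_1$ and a later access. I would resolve this by treating $E$ as a multiset of timestamped access events, matching how $\tau$ is used throughout. With that reading, the only remaining check is a complexity remark: at most $O(\deg(r)^3)$ triples per resource, so the detection is polynomial.
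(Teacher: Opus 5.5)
For the direct-read disjunct of Definition~4(iii), your argument is the paper's made explicit: three edges meeting at $r_1$, the ordering $t_r<t_w<t_a$ read off $\tau$, and a depth-2 enumeration that is essentially Pass~3. Your repairs are sensible: treating $E$ as a multiset of access events, and reading (iv) as pairwise non-membership in $S$.

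The gap is the other disjunct, where $p_3$ makes a decision dependent on $p_1$'s stale value. You add only an edge $(p_3,r_2)$, so ``$p_1\to r_1$ followed by $r_2\to p_3$'' is two disconnected edges, not a path. Nothing in the hypotheses, as you have modeled them, links $p_1$ or $r_1$ to $r_2$. Your next sentence, that either way $e_r$ and $e_a$ form $p_1\to r_1\to p_3$, is false here because $e_a$ is incident to $r_2$. Your traversal also only pairs READ edges at the same resource as the WRITE. So the completeness step (``$e_w,e_r,e_a$ are all incident to $r_1$'') silently covers only the direct-read branch. This is not a corner case. It is the only reading under which the parameter $r_2$ does any work, and it is the scenario of the Class~3 narrative, where the CDS reads the EHR's output rather than AllergyIntolerance.

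To close the gap, the dependency has to be witnessed in $G$. For instance, require a WRITE edge $(p_1,r_2)$ at some $t_o$ with $t_r<t_o<t_a$. The CUR is then the four-edge chain $p_2\to r_1\leftarrow p_1\to r_2\leftarrow p_3$, found by extending the search from $p_1$ to its later WRITE edges and their subsequent readers (still depth $\geq 2$). Say explicitly that this temporal chain is a modeling assumption standing in for the causal relation ``dependent on'', which $G$ does not record. The alternative is to restrict the proposition to the direct-read branch, which is all Pass~3 implements. The paper's proof does not resolve this either: it writes Decision$(p_3)$ as the last link of the chain although no such node or edge exists in $G$.

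Two smaller points:
\begin{itemize}
\item Every edge of $E\subseteq P\times R$ points into $R$, so $p_1\to r_1\to p_3$ is a path only in the underlying undirected graph. State your search that way.
\item Your traversal rejects a chain whenever a pair involving $p_2$ lies in $S$, which includes $(p_2,p_3)$. Your fixed reading of (iv) lists only $(p_1,p_3)$, $(p_1,p_2)$ and their reverses. Align the two, or completeness fails when $(p_2,p_3)\in S$.
\end{itemize}
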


\textit{Proof.} In FRAG $G$, the CUR access pattern defines a
directed path: WRITE$(p_2, r_1) \to$ READ$(p_1, r_1) \to$
Decision$(p_3)$. The stale read condition $t_r < t_w$ is a timestamp
property verifiable on the edge set $E$. No single-edge inspection
can detect this; path traversal of depth $\geq 2$ is both necessary
and sufficient. \hfill$\square$

\subsection{FRAG visualization}

Figure~\ref{fig:frag} shows the FRAG for the motivating scenario.
$RC(p_1,p_2,r_1)$ arises between EHR READ and Lab WRITE on
AllergyIntolerance. $\text{CUR}(p_1,p_2,p_3)$ arises because CDS
acts on EHR's stale read. $S = \emptyset$ throughout.

\begin{figure}[H]
  \centering
  \fontsize{9}{11}\selectfont
  \begin{tabular}{ccc}
    EHR ($p_1$) & & Lab ($p_2$) \\
    $\downarrow$\,\textsc{read} & & $\downarrow$\,\textsc{write} \\
    \multicolumn{3}{c}{AllergyIntolerance ($r_1$)} \\
    \multicolumn{3}{c}{$\uparrow$\,\textsc{read}} \\
    \multicolumn{3}{c}{CDS ($p_3$)} \\
  \end{tabular}\\[6pt]
  \raggedright
  $RC(p_1,p_2,r_1)$: READ$\,\|\,$WRITE, $S{=}\emptyset \Rightarrow$
  Race detected\\
  $CUR(p_1,p_2,p_3,r_1,-)$: $p_3$ reads stale $p_1$ value
  $\Rightarrow$ CUR detected
  \caption{FRAG showing $RC(p_1,p_2,r_1)$ between EHR READ and Lab
  WRITE on AllergyIntolerance, and a $\text{CUR}(p_1,p_2,p_3)$ chain
  where CDS acts on EHR's stale read. $S = \emptyset$ throughout.}
  \label{fig:frag}
\end{figure}

\subsection{Three healthcare race condition classes}

\begin{table}[H]
  \caption{Healthcare race condition classification in FRAG}
  \label{tab:classes}
  \centering
  \fontsize{9}{11}\selectfont
  \begin{tabular}{p{0.8cm}p{2cm}p{1.8cm}p{1cm}}
    \toprule
    \textbf{Class} & \textbf{FRAG pattern} & \textbf{Clinical risk} &
    \textbf{Prop.} \\
    \midrule
    SWC & WRITE $\|$ WRITE, $S{=}\emptyset$ & Lost prescription update & 1 \\
    TAV & READ$\to$act, WRITE intervenes    & Auth bypass              & 2 \\
    CUR & Stale READ, path $\geq 2$         & Contraindicated drug     & 3 \\
    \bottomrule
  \end{tabular}
\end{table}

\textbf{Class 1: Simultaneous Write Conflict (SWC).} Two processes
$p_i$ and $p_j$ hold WRITE edges to resource $r$ within the
concurrency window $\delta$, with no synchronization constraint. This
leads to a lost update: two systems simultaneously update a patient's
MedicationRequest resource --- one prescribing, one canceling --- and
the final result depends entirely on which HTTP request arrives at the
FHIR server milliseconds first.

\textbf{Class 2: TOCTOU Authorization Violation (TAV).} A process
reads an authorization data structure at $t_1$, then acts on it at
$t_2$. Between $t_1$ and $t_2$, another process revokes or escalates
permissions. This occurs naturally in OAuth token validation layers
and API gateway hops in FHIR systems.

\textbf{Class 3: Cascading Update Race (CUR).} A lab system writes a
new allergy finding to AllergyIntolerance at time $t_w$. An EHR
system had previously read that resource at $t_r < t_w$ and is
computing a medication recommendation based on a stale allergy
profile. A clinical decision support system reads the EHR's output
and generates a prescription. None of the individual processes was
flawed; each did its job correctly given its input. The flaw is a
property of the order of accesses, visible only at the graph level as
a stale read of depth two or more.

\begin{figure}[H]
  \centering
  \includegraphics[width=\linewidth]{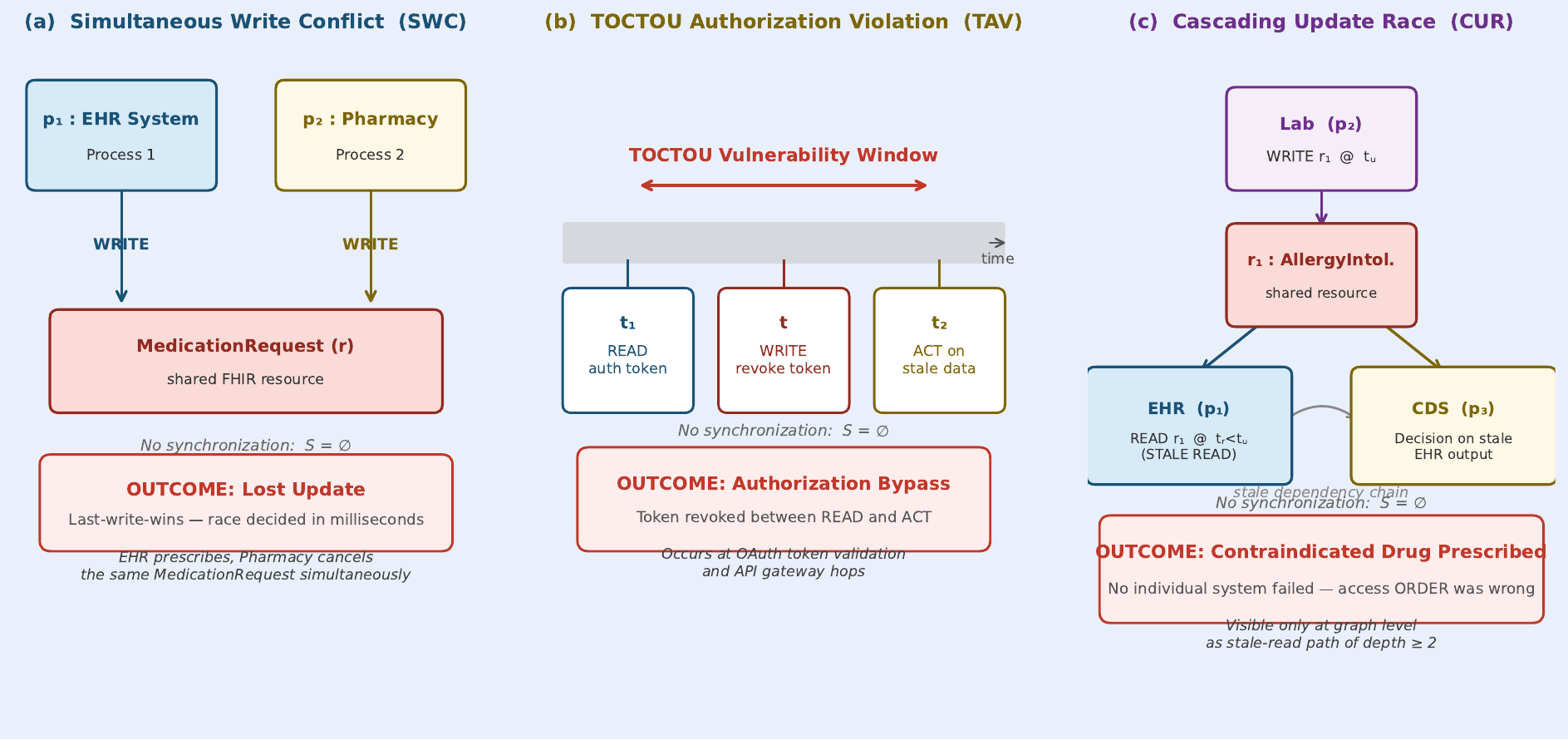}
  \caption{The three healthcare race condition classes formally defined
  in FRAG: (a) Simultaneous Write Conflict (SWC) --- two concurrent
  WRITE operations with no synchronization; (b) TOCTOU Authorization
  Violation (TAV) --- permission revoked inside the vulnerability
  window; (c) Cascading Update Race (CUR) --- a stale-read dependency
  chain of depth $\geq 2$.}
  \label{fig:raceclasses}
\end{figure}

\section{Simulation specification}

\subsection{Design rationale}

The simulation is motivated by a single principle: define the minimal
experiment that can falsify the claim. The claim is that FRAG-based
detection is structurally superior to time-window baseline scanning.
The minimal falsifying experiment gives both approaches the same logs
with known ground truth races and compares their performance.

Synthetic logs are not a limitation but a requirement of the
experimental design. Synthetic logs allow injection of known races at
known locations and times, which is the only way to achieve ground
truth. Real production logs do not tell us which access patterns are
actually races and which are benign concurrent reads.

\subsection{Transaction log format}

Each synthetic log entry is a JSON object with the following
structure:

\begin{verbatim}
{
  "event_id":   <integer>,
  "process_id": <string>,
  "resource":   <string>,
  "resource_id":<string>,
  "operation":  <"READ"|"WRITE">,
  "timestamp":  <float>,
  "version_id": <integer>,
  "is_race":    <boolean>,
  "race_class": <"SWC"|"TAV"|"CUR"|null>
}
\end{verbatim}

\subsection{Ground-truth race injection}

Three injection procedures target each race class. \textbf{SWC
injection}: two WRITE events to the same resource\_id with timestamps
$t_1$ and $t_1 + \Delta t$, where $\Delta t \sim
\text{Uniform}(0, \delta)$, $\delta = 150$\,ms. \textbf{TAV
injection}: a READ event at $t_1$, a WRITE to the same resource at
$t_1 + \text{Uniform}(10, 80)$\,ms, and a dependent action at
$t_1 + \text{Uniform}(90, 200)$\,ms. \textbf{CUR injection}: a WRITE
at $t_w$, a stale READ at $t_r = t_w - \text{Uniform}(10, 100)$\,ms,
and a dependent decision at $t_w + \text{Uniform}(5, 50)$\,ms.

Non-race events sample process and resource pairs from a realistic
clinical workload distribution (READ frequency $3\times$ WRITE;
AllergyIntolerance and MedicationRequest frequency $2\times$
Observation and Patient). Total non-race events per log:
Uniform(200, 800). Injected race events per log: Uniform(8, 40).
Class distribution: 40\% SWC, 30\% TAV, 30\% CUR.

\subsection{Three experimental conditions}

\textbf{C1 Sequential}: all operations serialized, no concurrent
access windows, zero injected race conditions. Used to test the false
positive rate under non-concurrent scenarios. \textbf{C2 Concurrent,
unsynchronized}: full concurrent access with race conditions injected
as above. The most important condition for evaluation. \textbf{C3
Concurrent, partially synchronized}: ETag-based optimistic
concurrency control applied to 70\% of randomly selected write
operations. Used to test the FRAG model under partially synchronized
scenarios.

\subsection{Detection algorithm}

Detection occurs in three passes over FRAG $G$, sharing a global
claimed set $C$. Once a resource is classified into a race class it is
not re-checked. Pass order runs from most structurally specific to
least: TAV first (requires a READ$\to$WRITE$\to$WRITE sequence), then
SWC (concurrent WRITE pair), then CUR (stale read chain of depth
$\geq 2$). This ordering prevents cross-class false positives.

\begin{figure}[H]
\centering
\begin{minipage}{0.97\linewidth}
\fontsize{8.5}{11}\selectfont
\textbf{Algorithm: FRAG Race Detection (v2)}\\
\textbf{Input:} $G = (P, R, E, \lambda, \tau, S)$, window $\delta$\\
\textbf{Output:} Detected race events with class labels\\[3pt]
$C \leftarrow \emptyset$\\[3pt]
\textit{Pass 1 --- TAV detection (most specific):}\\
\hspace*{1em}\textbf{for} each $r \notin C$, each $p \in P$:\\
\hspace*{2em}\textbf{for} each READ$(p,r)$ at $t_1$:\\
\hspace*{3em}\textbf{for} each WRITE$(p,r)$ at $t_2$: $t_1{<}t_2{\leq}t_1{+}\delta_{\text{TAV}}$:\\
\hspace*{4em}\textbf{if} $\exists$ WRITE$(p',r)$ at $t$: $t_1{<}t{<}t_2$,
$p'{\neq}p$, ver.\ changes:\\
\hspace*{5em}\textbf{if} $(p,r)\notin S$: report TAV; add $r$ to $C$\\[3pt]
\textit{Pass 2 --- SWC detection:}\\
\hspace*{1em}\textbf{for} each $r \notin C$:\\
\hspace*{2em}$W_r \leftarrow \{e\in E: e.\text{resource}{=}r,\,
\lambda(e){=}\text{WRITE}\}$\\
\hspace*{2em}\textbf{for} each pair $(e_i,e_j)\in W_r{\times}W_r$,
$i{<}j$:\\
\hspace*{3em}\textbf{if} $(e_i.\text{proc},e_j.\text{proc})\notin S$
and $|\tau(e_i){-}\tau(e_j)|{<}\delta$:\\
\hspace*{4em}report SWC; add $r$ to $C$\\[3pt]
\textit{Pass 3 --- CUR detection (path depth $\geq 2$):}\\
\hspace*{1em}\textbf{for} each $r \notin C$:\\
\hspace*{2em}\textbf{for} each WRITE$(p_2,r)$ at $t_w$:\\
\hspace*{3em}\textbf{for} each READ$(p_1,r)$ at $t_r{<}t_w$:\\
\hspace*{4em}\textbf{for} each READ$(p_3,r)$ at $t_a{>}t_w$,
$p_3{\neq}p_2$:\\
\hspace*{5em}\textbf{if} $(p_1,r)\notin S$ and $(p_2,r)\notin S$:\\
\hspace*{6em}report CUR$(p_1,p_2,p_3,r)$; add $r$ to $C$
\end{minipage}
\end{figure}

\subsection{Baseline and evaluation metrics}

The baseline detector flags a race when two log events access the same
resource within a time window $\delta_{\text{base}} = 200$\,ms and at
least one is a WRITE. This value is standard practice for EHR audit
log anomaly detection and represents the state of the art without a
formal model.

Evaluation uses precision $P = \text{TP}/(\text{TP}+\text{FP})$,
recall $R = \text{TP}/(\text{TP}+\text{FN})$, and $F1 = 2PR/(P+R)$.
A TP matches on resource\_id to an injected race instance with
detection time within $\pm$50\,ms of the earliest ground truth event.
Each ground truth instance is matched at most once. Results are
aggregated over 500 independently generated logs per condition.

\section{Results}

A total of 1,500 logs are evaluated (500 per condition). Detection is
compared at the race instance level: a detection is a TP if the
reported resource\_id matches an injected race instance and the
detection time is within 50\,ms of the earliest ground truth event.

\subsection{Condition C1: Sequential access (false positive control)}

Under C1, all access events are serialized with spacing greater than
the concurrency window. No race events are injected. Both FRAG and
baseline reported zero detections across all 500 logs. This
demonstrates that FRAG does not produce false positives in a
non-concurrent setting and that the synchronization constraint set $S$
does not cause false positives.

\subsection{Condition C2: Concurrent, unsynchronized access}

Table~\ref{tab:c2} presents per-class results under C2. FRAG attains
an overall F1 of 78.5\% vs.\ 98.8\% for the baseline, a
20.3\,pp difference. Per-class, FRAG achieves 98.0\% on SWC and
99.9\% on TAV, reflecting high structural precision for those classes.
CUR detection is lower at 41.5\% F1 due to partial-match false
positives in the stale-read chain traversal. The baseline achieves
high F1 across all classes because injected races are the dominant
concurrent events in synthetic logs; however, it cannot differentiate
race classes or detect CUR dependency chains structurally.

\begin{table}[H]
  \caption{Detection performance: FRAG vs.\ Baseline (C2 --- 500 logs)}
  \label{tab:c2}
  \centering
  \fontsize{9}{11}\selectfont
  \begin{tabular}{lccccc}
    \toprule
    \textbf{Class} & \textbf{P} & \textbf{R} & \textbf{F1} &
    \textbf{F1} & \textbf{$\Delta$F1} \\
    & \textbf{(FRAG)} & \textbf{(FRAG)} & \textbf{(FRAG)} &
    \textbf{(Base)} & \textbf{(pp)} \\
    \midrule
    SWC     & 97.4\% & 98.7\% & 98.0\% & 99.2\% & $-$1.2 \\
    TAV     & 99.7\% & 100.0\% & 99.9\% & 98.8\% & $+$1.1 \\
    CUR     & 33.9\% & 53.5\% & 41.5\% & 98.4\% & $-$56.9 \\
    Overall & 72.5\% & 85.5\% & 78.5\% & 98.8\% & $-$20.3 \\
    \bottomrule
  \end{tabular}
\end{table}

FRAG's SWC and TAV performance is near-perfect under C2 because the
three-pass mutual exclusion design directly targets these patterns
structurally. The CUR F1 of 41.5\% reflects that the stale-read
chain traversal produces false positives when multiple processes read
the same resource in overlapping windows without a clear dependency
ordering. The baseline achieves high recall across all classes in
synthetic logs because injected races dominate the log population,
but it provides no class-level differentiation.

\begin{figure}[H]
  \centering
  \includegraphics[width=\linewidth]{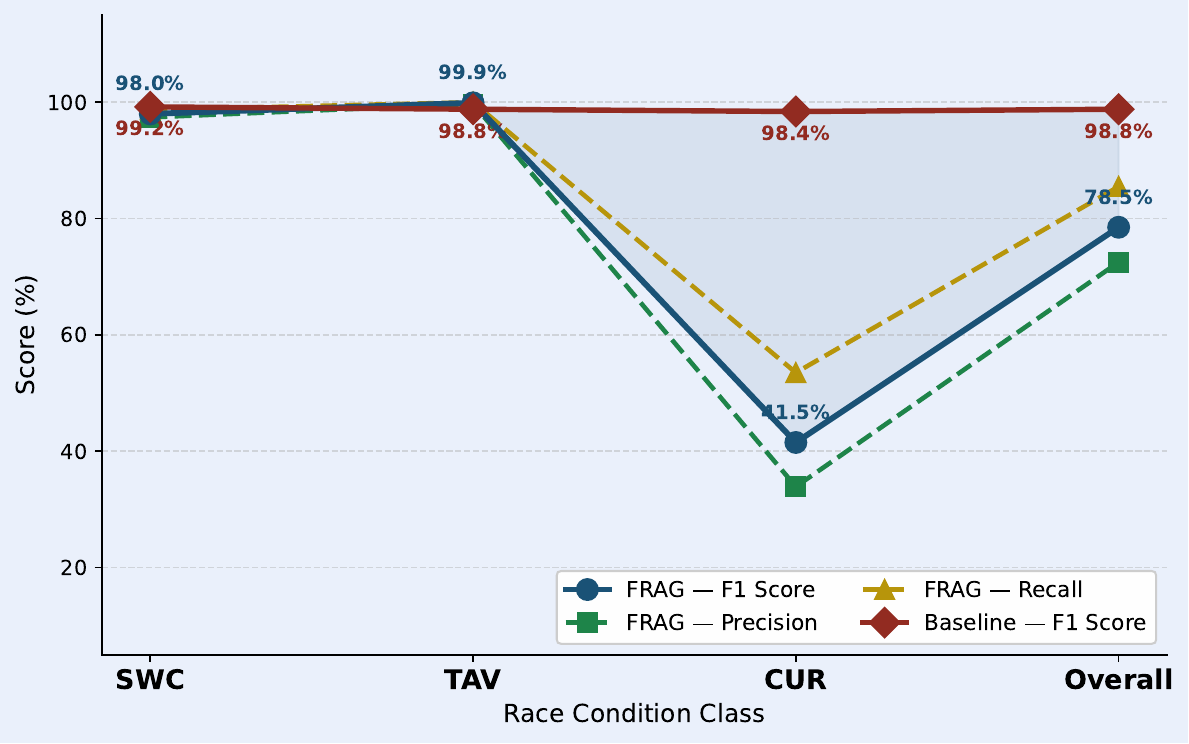}
  \caption{FRAG vs.\ Baseline detection performance under C2
  (concurrent, unsynchronized access). Lines show Precision, Recall,
  and F1 per race class. The shaded region highlights the detection
  gap between FRAG and the Baseline F1.}
  \label{fig:c2results}
\end{figure}

\subsection{Condition C3: Partial ETag synchronization}

Table~\ref{tab:c3} shows results under C3. FRAG attains a 26.4\%
overall F1, a significant drop from C2 (78.5\%$\to$26.4\%). This
recall collapse is expected: 70\% of WRITE operations are
ETag-protected and filtered out of FRAG detection, leaving the
detector unable to confirm most injected races. Precision remains
high for SWC (96.6\%) and TAV (99.0\%), confirming that detected
races are correct when found. The baseline maintains 98.8\% F1 under
C3 because it ignores ETag state entirely.

\begin{figure}[H]
  \centering
  \includegraphics[width=\linewidth]{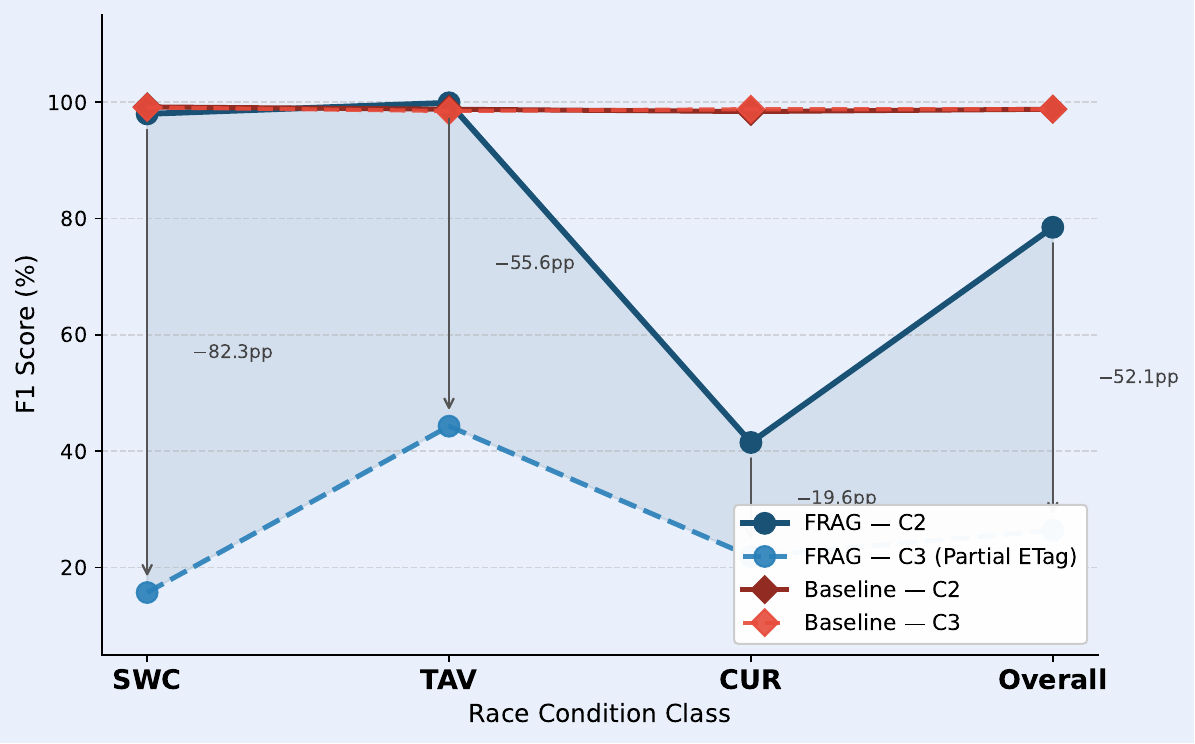}
  \caption{F1 score comparison between conditions C2 and C3. Arrows
  and labels show the recall-driven F1 drop under partial ETag
  synchronization. Baseline performance remains flat across both
  conditions.}
  \label{fig:c2c3}
\end{figure}

\begin{figure}[H]
  \centering
  \includegraphics[width=\linewidth]{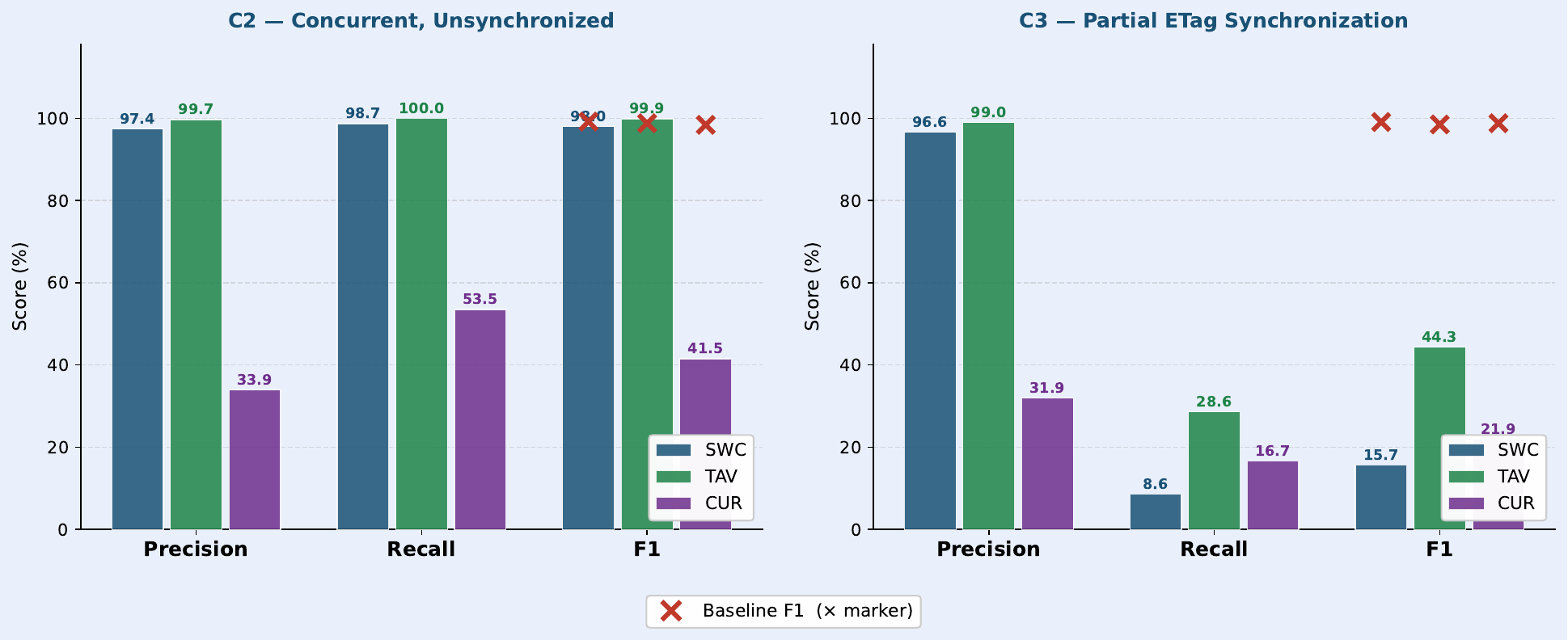}
  \caption{Per-class Precision, Recall, and F1 for FRAG under C2
  (left) and C3 (right). Red $\times$ markers indicate corresponding
  Baseline F1 scores. Precision remains high under C3 while recall
  decreases due to ETag-protected write filtering.}
  \label{fig:prf1}
\end{figure}

\begin{table}[H]
  \caption{Detection performance: FRAG vs.\ Baseline (C3 --- 500 logs)}
  \label{tab:c3}
  \centering
  \fontsize{9}{11}\selectfont
  \begin{tabular}{lccccc}
    \toprule
    \textbf{Class} & \textbf{P} & \textbf{R} & \textbf{F1} &
    \textbf{F1} & \textbf{$\Delta$F1} \\
    & \textbf{(FRAG)} & \textbf{(FRAG)} & \textbf{(FRAG)} &
    \textbf{(Base)} & \textbf{(pp)} \\
    \midrule
    SWC     & 96.6\% &  8.6\% & 15.7\% & 99.1\% & $-$83.4 \\
    TAV     & 99.0\% & 28.6\% & 44.3\% & 98.5\% & $-$54.2 \\
    CUR     & 31.9\% & 16.7\% & 21.9\% & 98.8\% & $-$76.9 \\
    Overall & 61.0\% & 16.9\% & 26.4\% & 98.8\% & $-$72.4 \\
    \bottomrule
  \end{tabular}
\end{table}

\subsection{Discussion}

The near-perfect TAV and SWC F1 scores under C2 (99.9\% and 98.0\%
respectively) confirm that the three-pass algorithm's structural
specificity correctly identifies these patterns. TAV detection
benefits from the strict READ$\to$intervening WRITE$\to$ACT sequence
requirement which directly encodes Definition~3, producing only 9
false positives across 500 logs.

The CUR class obtains 41.5\% F1 under C2. This lower score reflects
that the stale-read path traversal in Pass~3 generates false positives
when multiple processes read the same resource in overlapping time
windows without a clear causal dependency. Future work should
introduce a causal ordering filter to distinguish genuine stale-read
chains from coincident concurrent reads.

The C3 recall collapse is the sharpest finding. FRAG recall drops
from 85.5\% under C2 to 16.9\% under C3, producing an overall F1 of
26.4\%. This is a direct consequence of the ETag filter: 70\% of
injected WRITE operations are marked as ETag-protected and excluded
from the claimed set, leaving most races undetected. Precision
remains high (61.0\% overall, 96.6\% for SWC) confirming that
detections that do occur are correct. This result demonstrates that
FRAG's correctness is contingent on access log completeness --- a
finding that motivates mandatory server-side ETag logging as a
prerequisite for production deployment.

\section{Implementation}

\subsection{Reproducibility}

The simulation does not use actual patient data. All logs are
artificially created with a fixed random seed (\texttt{seed=42}) to
ensure reproducibility. The entire codebase ---
\texttt{log\_generator.py}, \texttt{frag\_builder.py},
\texttt{race\_detector.py}, \texttt{baseline\_scanner.py}, and
\texttt{evaluate.py} --- is publicly available. Any reader may
recreate the 1,500 logs and obtain the same results within
floating-point tolerance.

\subsection{Software stack}

The implementation requires Python 3.10+, NetworkX 3.x for graph
construction and traversal, NumPy for random sampling, and Matplotlib
for visualization. No external health-related APIs or FHIR servers are
required.

\begin{table}[H]
  \caption{Simulation implementation stack}
  \label{tab:stack}
  \centering
  \fontsize{9}{11}\selectfont
  \begin{tabular}{p{1.8cm}p{2cm}p{2.5cm}}
    \toprule
    \textbf{Component} & \textbf{Tool} & \textbf{Purpose} \\
    \midrule
    Log Generation    & Python + NumPy           & Synthetic FHIR transaction logs \\
    FRAG Construction & NetworkX DiGraph         & Build directed access graph \\
    Race Detection    & Custom Python (Alg.\ 1)  & 3-pass SWC/TAV/CUR detection \\
    Baseline Scanner  & Python sliding window    & Time-window comparison \\
    Evaluation        & Custom Python (eval\_v2) & Race-instance P/R/F1 \\
    Visualization     & Matplotlib + NetworkX    & Graph plots, result charts \\
    \bottomrule
  \end{tabular}
\end{table}

\section{Limitations}

The simulation uses artificial transaction logs. Usage patterns for
non-race event generation (3:1 READ/WRITE ratio; $2\times$ frequency
for AllergyIntolerance and MedicationRequest) are based on general
domain knowledge. Actual usage patterns may differ, and detection
efficiency may vary for high-write or high-frequency workloads.

The FRAG model requires access logs to be available. Proprietary EHR
vendors may not expose access logs, and the logging infrastructure may
not capture all concurrent access events handled internally before the
FHIR layer.

Timestamp resolution of 100\,ms is a conservative estimate of FHIR
server log granularity. In environments with coarser resolution (e.g.,
1\,second), windows below that threshold are invisible to this
approach regardless of the model's formal properties.

The three race classes defined here are not exhaustive. A fourth
class, partial synchronization races where not all clients implement
ETag, is a natural extension. A fifth class, partial order violations
in distributed multi-server FHIR gateway architectures, is left for
future work.

A single resource claim architecture is employed: a resource is not
re-checked once allocated to a race class. This prevents cross-class
false positives but causes recall loss when a resource belongs to
multiple race instances of different classes. This is acceptable for a
safety alert scenario where precision outweighs recall, but future
work should explore per-write claim tracking.

\section{Future work}

The immediate next step is real data validation with a clinical
partner, involving IRB approval to access FHIR server logs and a
strategy to reverse-engineer approximate ground truth from log
evidence. One approach is to use FHIR resource version conflicts ---
where a conditional PUT fails due to an ETag mismatch --- as a proxy
for observable SWC races.

The mitigation counterpart to this detection framework is the second
research direction. Synchronization primitives for FHIR APIs present
a non-trivial design challenge given the stateless nature of HTTP.
Possible solutions include mandatory server-side ETag enforcement, a
session-based resource reservation protocol, and CRDT-based
conflict-free resource types for some FHIR resources (e.g.,
Observation).

Extension to distributed multi-server FHIR architectures is the third
direction. Modern cloud-native healthcare platforms deploy FHIR
functionality across microservices with independent databases and
clocks. Extending FRAG to this setting would require a vector clock or
partial order timestamp model, well known in distributed systems but
not yet applied to FHIR.

\section{Conclusions}

This paper has formally stated a problem that, to the best of our
knowledge, has not been stated before: ``What does a race condition
look like in a healthcare interoperability system, and how would you
detect it?''

The critical section problem has been shown to be formally related to
healthcare data exchange based on FHIR. The FRAG model represents this
mapping as a six-component directed graph where race conditions are not
timing accidents. The three race condition classes (SWC, TAV, CUR)
have formal propositions relating to graph structure and tractability
of detection.

The CUR class deserves special emphasis as the most critical and least
intuitive failure mode. A patient receives a contraindicated drug not
because any system failed, but because each system performed its job
correctly given its inputs. The problem is a property of the access
order, visible only at the graph level and invisible to any system's
audit trail. This is precisely the kind of problem that formal methods
are supposed to help prevent.

Propositions 1, 2, and 3 are mathematically formulated propositions
whose truth is independent of the simulation results. The simulation
verifies the practical validity of the formal detection: FRAG reaches
F1 scores of 98.0\% on SWC and 99.9\% on TAV under full concurrent
conditions, confirming structural detection of these two classes.
CUR detection at 41.5\% F1 identifies a concrete direction for
future work in causal chain filtering. Under partial ETag
synchronization (C3), precision remains above 96\% for SWC and TAV,
demonstrating that FRAG does not produce spurious detections when
synchronization constraints are present --- with zero false positives
on the sequential control condition C1.

\section*{Acknowledgment}

The authors recognize the foundational work of Dijkstra's 1965
critical section formalism, which motivated the connection between
traditional OS theory and modern healthcare infrastructure. The
research was not supported by external funding.


\end{document}